\documentclass[conference]{IEEEtran}
\usepackage{graphicx}
\usepackage{amsfonts}
\usepackage{amsmath}
\usepackage{amssymb}
\usepackage{mathrsfs} 
\usepackage{color}
\usepackage{bm}
\newtheorem{thm}{Theorem}

\newtheorem{lem}{Lemma}
\newtheorem{proof}{proof}

\newtheorem{exam}{Example}

\ifCLASSINFOpdf \else \fi
\hyphenation{op-tical net-works semi-conduc-tor}

\begin{document}

\title{Generalized Reed-Solomon Codes with Sparsest and Balanced
Generator Matrices}

\author{\IEEEauthorblockN{Wentu Song ~ and ~ Kui Cai}
\IEEEauthorblockA{Singapore University of Technology and Design,
Singapore \\ Email: \{wentu\_song, cai\_kui\}@sutd.edu.sg} }
\maketitle

\begin{abstract}
We prove that for any positive integers $n$ and $k$ such that
$n\!\geq\! k\!\geq\! 1$, there exists an $[n,k]$ generalized
Reed-Solomon (GRS) code that has a sparsest and balanced generator
matrix (SBGM) over any finite field of size $q\!\geq\!
n\!+\!\lceil\frac{k(k-1)}{n}\rceil$, where sparsest means that
each row of the generator matrix has the least possible number of
nonzeros, while balanced means that the number of nonzeros in any
two columns differ by at most one. Previous work by Dau et al
(ISIT'13) showed that there always exists an MDS code that has an
SBGM over any finite field of size $q\geq {n-1\choose k-1}$, and
Halbawi et al (ISIT'16, ITW'16) showed that there exists a cyclic
Reed-Solomon code (i.e., $n=q-1$) with an SBGM for any prime power
$q$. Hence, this work extends both of the previous results.
\end{abstract}


\IEEEpeerreviewmaketitle

\section{Introduction}
Maximum distance separable (MDS) codes, especially Reed-Solomon
(RS) codes, with constrained generator matrices are recently
attracting attention for their applications in the scenarios where
encoding is performed in a distributed way
\cite{Hoang13}$-$\cite{SuLi17}. Examples of such scenarios include
wireless sensor networks \cite{Hoang13}, cooperative data exchange
\cite{Yan13,Yan14,SuLi17}, and simple multiple access networks
\cite{Halbawi13,Hoang15}. An interesting problem of this topic is
how to construct MDS codes that have a sparsest and balanced
generator matrix (SBGM), where sparsest means that each row of the
generator matrix has the least possible number of nonzeros, while
balanced means that the number of nonzeros in any two columns
differ by at most one \cite{Hoang13}. More specifically, in an
SBGM of an $[n,k]$ MDS code, the weight of each row is $n-k+1$ and
the weight of each column is either
$\lfloor\frac{k(n-k+1)}{n}\rfloor$ or
$\lceil\frac{k(n-k+1)}{n}\rceil$.

In general, for every MDS code we can easily find a sparsest
generator matrix. The difficulty of this problem is to ensure that
a sparsest generator matrix is also balanced. In \cite{Hoang13},
it was shown that there always exists an MDS code with an SBGM
over any finite field of size $q>{n-1\choose k-1}$ for any $n\geq
k\geq 1$. The authors in \cite{Halbawi16-1} constructed an $[n,k]$
cyclic Reed-Solomon code $($i.e., $n=q-1)$ that has an SBGM for
any prime power $q$ and any $k$ such that $1\leq k\leq n$.
However, it was left as an open problem whether there exists an
$[n,k]_q$ generalized Reed-Solomon (GRS) code with an SBGM for
$k\leq n<q-1$. In this paper, we extends the results in
\cite{Hoang13,Halbawi16-1} by proving that for any positive
integers $n$ and $k$ such that $n\!\geq\! k\!\geq\! 1$, there
exists an $[n,k]$ generalized Reed-Solomon code that has an SBGM
over any finite field $\mathbb F_q$ of size
$q\!\geq\!n+\lceil\frac{k(k-1)}{n}\rceil$.

\subsection{Related Work}
MDS codes with more general constraints on the support of their
generator matrices were studied in \cite{Halbawi13,Yan14,Hoang14}.
A conjecture, called GM-MDS Conjecture, was proposed in
\cite{Hoang14} stating that given any $k\times n$ binary matrix
$M=(m_{i,j})$ that satisfies the so-called MDS Condition, there
exists an $[n,k]_q$ MDS code for any prime power $q\geq n+k-1$
that has a generator matrix $G=(g_{i,j})$ satisfying $g_{i,j}=0$
whenever $m_{i,j}=0$, where the MDS Condition requests that for
any $r\in\{1,2,\cdots,k\}$, the union of the supports of any $r$
rows of $M$ has size at least $n-k+r$.\footnote{Another conjecture
which is equivalent to the GM-MDS Conjecture was proposed in
\cite{Yan14}.} Unfortunately, the GM-MDS Conjecture is proved to
be true only for some very special cases, that is, a) the rows of
$M$ are divided into three groups such that the rows within each
group have the same support \cite{Halbawi13}; or b) the supports
of any two rows of $M$ intersect with at most one element
\cite{Hoang14}; or c) $k\leq 5$ \cite{Heidarzadeh17}.

\section{Preliminary}
For any positive integer $n$, $[n]:=\{1,2,\cdots,n\}$; if $n\leq
0$, $[n]$ is the empty set. For any set $A$, $|A|$ is the size
$($i.e., the number of elements$)$ of $A$. We denote by $\mathbb
F_q$ the field with $q$ elements, where $q\geq 2$ is a prime
power. The support of a row/column vector over $\mathbb F_q$ is
the set of its nonzero coordinates and the weight of a row/column
vector is the size of its support.

A multiset $S$ with underlying set $\{s_1,s_2,\cdots,s_L\}$ is a
set of ordered pairs $S=\{(s_1,n_1),(s_2,n_2),\cdots,(s_L,n_L)\}$,
where each $n_i\geq 0$ is an integer, called the multiplicity of
$s_i$ and denoted by $n_i=\text{mult}_{S}(s_i)$. We also denote
$S$ as
$S=\{s_1,\cdots,s_1,s_2,\cdots,s_2,\cdots,s_L,\cdots,s_L\}$, where
each $s_i$ appears $n_i$ times in $S$ and is also called an
element of $S$. The size $|S|$ of $S$ is the sum of the
multiplicities of its different elements, i.e.,
$|S|=\sum_{i=1}^Ln_i$. Any subset $S_0$ of
$\{s_1,s_2,\cdots,s_L\}$ can be viewed as a multiset such that
$\text{mult}_{S_0}(s_i)=1$ if $s_i\in S_0$, and
$\text{mult}_{S_0}(s_i)=0$ if $s_i\notin S_0$. If
$S'=\{(s_1,m_1),(s_2,m_2),\cdots,(s_L,m_L)\}$ is another multiset,
not necessarily $S'\neq S$, the union of $S$ and $S'$, denoted by
$S\sqcup S'$, is
$\{(s_1,n_1\!+\!m_1),(s_2,n_2\!+\!m_2),\cdots,(s_L,n_L\!+\!m_L)\}$.

Let $\mathcal P_k[x]$ denote the set of polynomials in $\mathbb
F_q[x]$ of degree less than $k$, including the zero polynomial,
where $x$ is an indeterminant. Then $\mathcal P_k[x]$ is a
$k$-dimensional vector space over $\mathbb F_q$ according to the
usual addition and multiplication of polynomials. Let $q\geq n\geq
k$ and $a_1,a_2,\cdots,a_n$ be $n$ distinct elements of $\mathbb
F_q$. The $[n,k]$ generalized Reed-Solomon (GRS) code defined by
$a_1,a_2,\cdots,a_n$ is \cite{MacWilliams}:
$$\mathcal C=\{(f(a_1),f(a_2),\cdots,f(a_n));f\in\mathcal
P_k[x]\}.$$ The code $\mathcal C$ is an MDS code, i.e., the
minimum distance of $\mathcal C$ is $d\!=\!n\!-\!k\!+\!1$. A
generator matrix $G$ of $\mathcal C$ is said to be sparsest and
balanced if $G$ satisfies the following two conditions:
\begin{enumerate}
 \item[(P1)] Sparsest condition: the weight of each row of $G$
 is exactly $n-k+1$;
 \item[(P2)] Balanced condition: the weight of each column of $G$
 is either $\lfloor\frac{k(n-k+1)}{n}\rfloor$ or
 $\lceil\frac{k(n-k+1)}{n}\rceil$.
\end{enumerate}
A GRS code that has a sparsest and balanced generator matrix
(SBGM) is simply called a sparsest and balanced GRS code.

\section{Existence of Sparsest and Balanced GRS Codes}
In this section, we prove that there always exists a sparsest and
balanced $[n,k]$ GRS code for any $n \geq k \geq 1$. Formally, we
have the following theorem.
\begin{thm}\label{main-thm}
For any $n\!\geq\! k\!\geq\! 1$, there exists an $[n,k]$
generalized Reed-Solomon code that has a sparsest and balanced
generator matrix over any field $\mathbb F_q$ of size $q\!\geq\!
n\!+\!\!\lceil\frac{k(k-1)}{n}\rceil$.
\end{thm}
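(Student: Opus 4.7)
My plan is to decouple the problem into a combinatorial design stage and an algebraic existence stage. First, I would specify which entries of the generator matrix should be zero by constructing sets $Z_1,\ldots,Z_k\subseteq[n]$ with $|Z_i|=k-1$ for every $i$, such that the column multiplicities $w_j:=|\{i:j\in Z_i\}|$ satisfy $w_j\in\{\lfloor k(k-1)/n\rfloor,\lceil k(k-1)/n\rceil\}$ for every $j\in[n]$ (so that the resulting generator matrix will meet (P1)--(P2)), and additionally the MDS Condition $|\bigcap_{i\in I}Z_i|\le k-|I|$ holds for every nonempty $I\subseteq[k]$. A natural candidate is a cyclic-arc construction: let $Z_i$ be a length-$(k-1)$ arc in $\mathbb Z_n$ starting at an $i$-dependent shift. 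Elementary counting yields the balanced column weights, and the arc structure lets one verify the MDS Condition by bounding the sizes of pairwise and higher-order arc intersections.

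For the algebraic stage, given distinct evaluation points $a_1,\ldots,a_n\in\mathbb F_q$, I would realize the above pattern by the polynomials $f_i(x):=\prod_{j\in Z_i}(x-a_j)\in\mathcal P_k[x]$, each of degree exactly $k-1$. The $k\times n$ matrix $G_{i,j}:=f_i(a_j)$ then automatically satisfies (P1) and (P2). The remaining task is to choose the $a_j$'s so that $\{f_1,\ldots,f_k\}$ spans $\mathcal P_k[x]$, equivalently so that $D(a_1,\ldots,a_n):=\det F\ne 0$, where $F$ is the $k\times k$ coefficient matrix of $f_1,\ldots,f_k$ in the basis $1,x,\ldots,x^{k-1}$. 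The indeterminate $a_j$ enters $f_i$ only when $j\in Z_i$, and then linearly; hence $\deg_{a_j}D\le w_j\le\lceil k(k-1)/n\rceil$. Multiplying by the Vandermonde factor $V:=\prod_{1\le i<l\le n}(a_i-a_l)$ to enforce distinctness gives $\deg_{a_j}(DV)\le \lceil k(k-1)/n\rceil+(n-1)<q$, by the field-size hypothesis. Once $D\not\equiv 0$ has been established, an iterated one-variable-at-a-time substitution (or the combinatorial Nullstellensatz) produces an evaluation $(a_1^*,\ldots,a_n^*)\in\mathbb F_q^n$ on which $DV$ is nonzero, and these are the desired distinct evaluation points yielding an SBGM.

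The main obstacle is proving $D\not\equiv 0$ for the pattern of the first stage, which is a special case of the GM-MDS conjecture (open in general); it must therefore be attacked directly using the balanced/cyclic structure rather than invoked as a black box. I expect the proof of this step to isolate an appropriate lex-leading monomial of $\det F$---with the variable order dictated by the cyclic structure---and show that its coefficient in the Leibniz expansion is a single nonzero term, not cancelled by any other permutation. The existence of the required "diagonal" permutation follows from a Hall/matching argument applied to the MDS Condition, and it is the cyclic structure from the first stage that rules out cancellation from competing permutations. Once this combinatorial lemma is in place, the degree estimate and polynomial argument above deliver the theorem.
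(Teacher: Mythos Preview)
Your high-level decomposition---choose a balanced zero pattern $Z_1,\ldots,Z_k$ with $|Z_i|=k-1$, bound $\deg_{a_j}D$ by the column multiplicity $\lceil k(k-1)/n\rceil$, and then use a one-variable-at-a-time substitution over $\mathbb F_q$ with $q\ge n+\lceil k(k-1)/n\rceil$ to find distinct $a_j$'s with $D\ne 0$---is exactly the paper's framework (Lemma~\ref{zero-subset} supplies the pattern and the degree bound; Lemma~\ref{dgr-nzr} is the substitution step; and the proof of Theorem~\ref{main-thm} is your ``algebraic stage'' verbatim). So the skeleton is right.

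Where you diverge is in the combinatorial stage. You propose cyclic arcs and plan to verify the MDS Condition, then argue $D\not\equiv 0$ via a lex-leading monomial tied to the cyclic structure. The paper does \emph{not} use arcs: it builds the columns of $W$ by a two-phase greedy procedure (Algorithms~1 and~2, filling column supports from the multisets $S=\{(1,k{-}1),\ldots,(k{-}1,1)\}$ and $T=\{(k,k{-}1),\ldots,(2,1)\}$). That construction is engineered so that one particular monomial $\bm\alpha^{X^*}$, with $X^*$ recording the sizes $|S_1|,\ldots,|S_{\lambda_1}|$ produced by Algorithm~1, arises from a \emph{unique} pair $(\sigma^*,(X_1^*,\ldots,X_k^*))$ in the Leibniz expansion (Claim~3), forcing $\xi\not\equiv 0$. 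The MDS Condition never appears; uniqueness of the decomposition is proved directly from the way Algorithm~1 lays down the sequence $1,2,\ldots,k{-}1,1,2,\ldots,k{-}2,\ldots$ into consecutive blocks. In other words, the paper chooses the pattern precisely to make the non-cancellation step a mechanical check, whereas your plan fixes the pattern first (arcs) and hopes the non-cancellation can then be extracted from it. That is the genuine gap in your proposal: for cyclic arcs you have not identified a monomial with a provably unique contributing permutation, and Hall's theorem (which you invoke) only gives existence of \emph{some} nonzero term, not absence of cancellation. The paper's experience---an intricate case analysis even for its tailored pattern---suggests that supplying this missing lemma for arcs is the entire difficulty, not a detail.
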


\vspace{0.1cm}
Clearly, $[1,1,\cdots,1]$ is an SBGM of the $[n,1]$ GRS code; and
the identity matrix is an SBGM of the $[n,n]$ GRS code. Hence, in
the following, we only need to consider the case of $$n> k\geq
2.$$

Before proving Theorem \ref{main-thm}, we first prove two lemmas.

First, let $\bm\alpha=(\alpha_1,\alpha_2,\cdots,\alpha_n)$ be an
$n$-tuple of distinct indeterminants. For each subset $Z$ of $[n]$
and $0\leq \ell\leq|Z|$, let $s_Z^{(\ell)}(\bm \alpha)$ be the
$\ell$th elementary symmetric polynomial with respect to
$\{\alpha_j;j\in Z\}$. That is,
$$s_Z^{(0)}(\bm \alpha)=1,$$ and for $1\leq\ell\leq|Z|$,
$$s_Z^{(\ell)}(\bm \alpha)=\sum_{U\subseteq Z\text{~and~}|U|=\ell}
\left(\prod_{j\in U}\alpha_j\right).$$ Then we have the following
lemma.
\begin{lem}\label{zero-subset}
Suppose $n>k\geq 2$. There exists a $k\times n$ binary matrix
$W=(w_{i,j})$ satisfying the following four conditions:
\begin{enumerate}
 \item[(i)] The weight of each row of $W$ is $k-1$;
 \item[(ii)] The weight of each column of $W$ is either
 $\lfloor\frac{k(k-1)}{n}\rfloor$ or
 $\lceil\frac{k(k-1)}{n}\rceil$;
 \vspace{0.1cm}\item[(iii)] $\xi(\bm \alpha)\not\equiv
 0$, where
 \begin{align}\label{def-xi}\hspace{-1.2cm}
 \xi(\bm \alpha)\!=\!\left|\!\!\begin{array}{ccccccccc}
 \vspace{0.2cm}
 s_{Z_1}^{(0)}(\bm \alpha)&s_{Z_2}^{(0)}(\bm \alpha)&\!\cdots\!
 &s_{Z_k}^{(0)}(\bm \alpha)\\
 \vspace{0.2cm}
 s_{Z_1}^{(1)}(\bm \alpha)&s_{Z_2}^{(1)}(\bm \alpha)&\!\cdots\!
 &s_{Z_k}^{(1)}(\bm \alpha)\\
 \vspace{0.2cm}
 s_{Z_1}^{(2)}(\bm \alpha)&s_{Z_2}^{(2)}(\bm \alpha)&\!\cdots\!
 &s_{Z_k}^{(2)}(\bm \alpha)\\
 \vspace{0.1cm}\cdots&\cdots&\!\cdots\!&\cdots\\
 \vspace{0.2cm}
 s_{Z_1}^{(k-1)}(\bm \alpha)&s_{Z_2}^{(k-1)}(\bm \alpha)&\!\cdots\!
 &s_{Z_k}^{(k-1)}(\bm \alpha)\\
 \end{array}\!\!\right|, \end{align} and
 $Z_i$ is the support of the $i$th row of $W$, $\forall i\in[k]$;
 \vspace{0.2cm}
 \item[(iv)] The degree of each $\alpha_i$ in $\xi(\bm \alpha)$ is at
 most $\lceil\frac{k(k-1)}{n}\rceil$.
\end{enumerate}
\end{lem}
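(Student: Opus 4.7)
The plan is to construct $W$ explicitly via a cyclic design, verify (i), (ii) and (iv) by short combinatorial and degree arguments, and reserve the main effort for the non-vanishing condition (iii). For the construction, I would take the row supports $Z_i\subseteq[n]$, $i\in[k]$, to be $k$ windows of $k-1$ cyclically consecutive residues on $\mathbb Z/n\mathbb Z$, with starting points $p_i$ spread roughly as $p_i=\lfloor(i-1)n/k\rfloor+1$ so that the multiset $Z_1\sqcup\cdots\sqcup Z_k$ covers each element of $[n]$ either $\lfloor k(k-1)/n\rfloor$ or $\lceil k(k-1)/n\rceil$ times. Condition (i) is then immediate since $|Z_i|=k-1$, and (ii) follows from the count $\sum_i|Z_i|=k(k-1)$ together with the even spread of the starting points.

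Condition (iv) is essentially free once (i) and (ii) are in hand: since $s_{Z_j}^{(\ell)}(\bm\alpha)$ is a sum of squarefree monomials in $\{\alpha_i:i\in Z_j\}$, every term in the Leibniz expansion of $\xi(\bm\alpha)$ picks one entry from each of the $k$ columns of the matrix in (\ref{def-xi}), so the $\alpha_i$-degree of any such term is at most $|\{j\in[k]:i\in Z_j\}|$, which is the column weight of $W$ at position $i$ and is at most $\lceil k(k-1)/n\rceil$ by (ii).

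The crux is (iii). My key observation is that, after reversing rows and inserting signs, the matrix in (\ref{def-xi}) is precisely the coefficient matrix in the basis $\{1,x,\ldots,x^{k-1}\}$ of the monic polynomials $f_j(x):=\prod_{i\in Z_j}(x-\alpha_i)\in\mathcal P_k[x]$; hence $\xi\not\equiv 0$ is equivalent to the linear independence of $f_1,\ldots,f_k$ over $\mathbb F(\bm\alpha)$. My approach would be induction on $n$: use the identity $s_{Z_j}^{(\ell)}(\bm\alpha)\big|_{\alpha_{i_0}=0}=s_{Z_j\setminus\{i_0\}}^{(\ell)}(\bm\alpha)$ to specialize a cleverly chosen $\alpha_{i_0}$ to zero and reduce to a strictly smaller determinant of the same form, with base case $n=k$. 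The main obstacle is that this specialization shrinks some $Z_j$'s and breaks the uniform row-weight hypothesis, so the induction must be set up on a broader class of matrices---row supports of size $\leq k-1$ satisfying a Hall/MDS-type union condition. A cleaner alternative is a leading-term argument: identify a specific monomial $\prod_i\alpha_i^{c_i}$ in the Leibniz expansion of $\xi(\bm\alpha)$ whose realization is combinatorially unique, so that its coefficient in $\xi$ is $\pm 1$; making this work requires fully exploiting the cyclic structure of the $Z_i$'s chosen above.
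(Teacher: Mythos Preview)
Your overall plan---construct $W$ explicitly, verify (i), (ii), (iv) by counting and degree bounds, and attack (iii) by exhibiting a monomial of $\xi$ with coefficient $\pm1$---matches the paper's strategy. The construction, however, is genuinely different: the paper does \emph{not} use cyclic intervals. It builds the column supports $Y_j=S_j\cup T_j$ of $W$ by a two-pass greedy partition of the multisets $S=\{(1,k{-}1),\ldots,(k{-}1,1)\}$ and $T=\{(k,k{-}1),\ldots,(2,1)\}$ (the paper's Algorithms~1 and~2), and the whole construction is engineered so that the monomial $\bm\alpha^{X^*}$ with $\text{mult}_{X^*}(j)=|S_j|$ occurs in $\xi$ with coefficient $\pm1$; the uniqueness of its decomposition $X^*=X_1^*\sqcup\cdots\sqcup X_k^*$ with $X_i^*\subseteq Z_i$ and $\{|X_i^*|\}=\{0,1,\ldots,k-1\}$ is exactly the paper's Claim~3. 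Your argument for (iv) is essentially the paper's; your claim for (ii) is very likely true for $p_i=\lfloor(i-1)n/k\rfloor+1$, but ``even spread'' plus the identity $\sum_i|Z_i|=k(k-1)$ is not by itself a proof---you still need a short discrepancy lemma pinning every column weight to one of two consecutive values.

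The substantive gap is (iii). Your induction-on-$n$ route, once broadened to supports of size $\le k-1$ satisfying a Hall/MDS-type union condition, is precisely the combinatorial core of the GM-MDS Conjecture discussed in the paper's introduction, which at the time was open except in a few special cases; so that path is strictly harder than the lemma you are trying to prove. Your ``cleaner alternative'' is the right device and is exactly what the paper uses, but you have neither named the distinguished monomial for your cyclic $Z_i$'s nor argued its uniqueness. A natural candidate is the staircase choice $X_i=\{p_i,\ldots,p_i+(k-i)-1\}$, and in small examples it does turn out to be unique; but converting this into a proof for all $n>k\ge2$ requires a careful combinatorial argument exploiting the interval structure, comparable in length and delicacy to the paper's Appendix~C. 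Until that step is supplied, the proposal remains a plan with its central step missing.
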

\begin{proof}
First, consider $n\geq k(k-1)$. In this case, we have
$\lceil\frac{k(k-1)}{n}\rceil=1$. Let $W=(w_{i,j})$ such that
$w_{i,j}=1$ for each $i\in[k]$ and each $(i-1)(k-1)+1\leq j\leq
i(k-1)$, and $w_{i,j}=0$ otherwise. Then we have
$Z_i=\{j\in[n];(i-1)(k-1)+1\leq j\leq i(k-1)\}$, and
$Z_1,Z_2,\cdots,Z_k$ are mutually disjoint. It is easy to check
that $W$ satisfies conditions (i) $-$ (iv).

In the following, we consider the case that $k(k-1)>n$. Since we
have
assumed $n>k\geq 2$, then we always have
$$k(k-1)>n>k\geq 2.$$
For convenience, we write $k(k-1)$ as
\begin{align}\label{def-r}k(k-1)=an+r \end{align} where $0\leq
r\leq n-1$, and let
\begin{align}\label{def-u}
\bm{\delta}&=(\delta_1, \delta_2, \cdots,
\delta_n)\nonumber\\&=(\overbrace{a+1, \cdots,
a+1}^{r~(a+1)\text{'s}}, ~\overbrace{a, \cdots,
a)}^{n-r~r\text{'s}}.\end{align}

Here we point out some simple facts about $a$ and $\bm \delta$.
First, since $k(k-1)>n>k\geq 2$, if $r=0$, then
\begin{align*}
2\leq a=\left\lfloor\frac{k(k-1)}{n}\right\rfloor
=\left\lceil\frac{k(k-1)}{n}\right\rceil< k-1;\end{align*} if
$0<r\leq n-1$, then \begin{align*}
1\leq a=\left\lfloor\frac{k(k-1)}{n}\right\rfloor<a+1
=\left\lceil\frac{k(k-1)}{n}\right\rceil\leq k-1.\end{align*} So
we always have
\begin{align}\label{a-is-f-c}\delta_j\in\left\{\left\lfloor\frac{k(k-1)}{n}
\right\rfloor,\left\lceil\frac{k(k-1)}{n}\right\rceil\right\}.\end{align}
and
\begin{align}\label{avg}2\leq a+1\leq k-1.\end{align}
Moreover, by \eqref{def-r} and \eqref{def-u}, we have
\begin{align}\label{sum-ones}
\sum_{j=1}^n\delta_j=(a+1)r+a(n-r)=an+r=k(k-1).\end{align}


\textbf{Construction of} $\bm W$: The binary matrix $W$ is
constructed by the following three steps.

\emph{Step 1.} List the elements of the multiset
$$S=\{(1,k-1),(2,k-2),\cdots,(k-1,1)\}$$ in a sequence
\begin{align}\label{def-c}\overline{S}
&=\underbrace{1,2,\cdots,k-1},\underbrace{1,2,\cdots,k-2},
\cdots,\underbrace{1,2},1\nonumber\\&=c_1,c_2,\cdots,c_K\end{align}
where
\begin{align}\label{def-K}
K=\sum_{\ell=1}^{k-1}\ell=\frac{k(k-1)}{2}.\end{align} Then
construct subsets $S_1,S_2,\cdots,S_n$ of $[k]$ by Algorithm 1.

\emph{Step 2.} List the elements of the multiset
$$T=\{(k,k-1),(k-1,k-2),\cdots,(2,1)\}$$ in a sequence
\begin{align}\label{def-e}\overline{T}&=k,\underbrace{k-1,k},
\underbrace{k-2,k-1,k},\cdots,
\underbrace{2,3,\cdots,k}\nonumber\\
&=e_1,e_2,\cdots,e_K\end{align} and let
\begin{align}
\label{def-v}\bm \theta=(\theta_1,\theta_2,\cdots,\theta_n) =\bm
\delta-(|S_1|,|S_2|,\cdots,|S_n|).\end{align} Then construct
subsets $T_1,T_2,\cdots,T_n$ of $[k]$ by Algorithm 2.

\emph{Step 3.} Let $W$ be the $k\times n$ binary matrix such that
for each $j\in[n]$, $Y_j=S_j\cup T_j$ is the support of the $j$th
column of $W$.

\renewcommand\figurename{Fig}
\begin{figure}[htbp]
\begin{center}
\includegraphics[width=8.8cm]{fig1.4}
\end{center}
\end{figure}

\renewcommand\figurename{Fig}
\begin{figure}[htbp]
\begin{center}
\includegraphics[width=8.8cm]{fig1.5}
\end{center}
\end{figure}

Two examples of our construction are given in Section IV.
Moreover, we have the following three claims.

\vspace{0.1cm}\emph{Claim 1}. For each $j\in[n]$, $S_j$ is a
subset of $[k]$ and, when viewed as multisets, we have
$\sqcup_{j=1}^{\lambda_1}S_j=S$, where $\lambda_1$ is the value of
$j$ at the end of the while loop of Algorithm 1.

\vspace{0.1cm}\emph{Claim 2}. For each $j\in[n]$, $T_j$ is a
subset of $[k]$ and $S_j\cap T_j=\emptyset$. Moreover, when viewed
as multisets, we have $\sqcup_{j=1}^nT_j=T$.

\vspace{0.1cm}\emph{Claim 3}. Let
$X^*=\{(1,|S_1|),(2,|S_2|),\cdots,(\lambda_1,|S_{\lambda_1}|)\}$.
Then there exist a unique $\sigma^*\in\mathscr S_k$ and a unique
$(X_1^*,X_2^*,\cdots,X_k^*)\in\mathcal X_{\sigma^*}$ such that
$X^*=X_1^*\sqcup X_2^*\sqcup\cdots\sqcup X_k^*$, where $\mathscr
S_k$ denotes the permutation group on $[k]$ and, for each
$\sigma\in\mathscr S_k$, $\mathcal X_\sigma$ denotes the set of
all tuples $(X_1,X_2,\cdots,X_k)$ such that $X_i\subseteq Z_i$ and
$|X_i|=\sigma(i)-1$, $i=1,2,\cdots,k$.

\vspace{0.1cm} The proof of Claims 1 $-$ 3 are given in Appendices
A $-$ C, respectively.


Note that for each $i\in[k]$, $\text{mult}_{S\sqcup T}(i)=k-1$.
Then by Claims 1 and 2, each $i\in[k]$ is contained by $k-1$ sets
in the collection $\{Y_1,Y_2,\cdots,Y_n\}$, where $Y_j~(j\in[n])$
is the support of the $j$th column of $W$ by our construction. So
each row of $W$ has weight $k-1$, hence condition (i) is
satisfied.

For each $j\in[n]$, by \eqref{a-is-f-c},
$\delta_j\in\{\lfloor\frac{k(k-1)}{n}
\rfloor,\lceil\frac{k(k-1)}{n}\rceil\}.$ So by Claims 1, 2 and
Algorithm 2, the weight of the $j$th column of $W$ is
$|Y_j|=|S_j|+|T_j|=\delta_j\in\{\lfloor\frac{k(k-1)}{n}\rfloor,
\lceil\frac{k(k-1)}{n}\rceil\}$, hence condition (ii) is
satisfied.

For any multiset $X=\{(1,\ell_1),(2,\ell_2),\cdots,(n,\ell_n)\}$,
let
$$\bm\alpha^{X}:=\prod_{j=1}^n\alpha_j^{\ell_j}.$$ Then from \eqref{def-xi},
we have
\begin{align}\label{xi-mult-set}
\xi(\bm \alpha)&=\sum_{\sigma\in\mathscr
S_k}\text{sgn}(\sigma)\prod_{i=1}^ks_{Z_i}^{(\sigma(i)-1)}(\bm
\alpha)\nonumber\\&=\sum_{\sigma\in\mathscr
S_k}\text{sgn}(\sigma)\sum_{(X_1,X_2,\cdots,X_k)\in\mathcal
X_\sigma}\bm\alpha^{X_1}\bm\alpha^{X_2}\cdots\bm\alpha^{X_k}\nonumber\\
&=\sum_{\sigma\in\mathscr
S_k}\text{sgn}(\sigma)\sum_{(X_1,X_2,\cdots,X_k)\in\mathcal
X_\sigma}\bm\alpha^{X_1\sqcup X_2\sqcup\cdots\sqcup X_k}.
\end{align}
where $\text{sgn}(\sigma)$ denotes the sign of the permutation
$\sigma$. By Claim 3, there exist a unique $\sigma^*\in\mathscr
S_k$ and a unique $(X_1^*,X_2^*,\cdots,X_k^*)\in\mathcal
X_{\sigma^*}$ such that $X^*\!=\!X_1^*\sqcup
X_2^*\sqcup\cdots\sqcup X_k^*$. So by \eqref{xi-mult-set},
$\text{sgn}(\sigma^*)\bm\alpha^{X_1^*\sqcup
X_2^*\sqcup\cdots\sqcup X_k^*}$ is a non-zero monomial in $\xi(\bm
\alpha)$. Hence, $\xi(\bm \alpha)\not\equiv 0$ and condition (iii)
is satisfied.

Note that $X_i\subseteq Z_i, ~\forall i\in[k],$ and each column of
$W$ has weight either $\lfloor\frac{k(k-1)}{n} \rfloor$ or
$\lceil\frac{k(k-1)}{n}\rceil$, i.e., each $j\in[n]$ is contained
by at most $\lceil\frac{k(k-1)}{n}\rceil$ sets in
$\{Z_1,Z_2,\cdots,Z_k\}$, where $Z_i$ is the support of the $i$th
row of $W$. So in \eqref{xi-mult-set}, the degree of $\alpha_j$ in
each $\bm\alpha^{X_1\sqcup X_2\sqcup\cdots\sqcup X_k}$ is at most
$\lceil\frac{k(k-1)}{n}\rceil$. Hence, the degree of $\alpha_j$ in
$\xi(\bm\alpha)$ is at most $\lceil\frac{k(k-1)}{n}\rceil$. Hence,
condition (iv) is satisfied, which completes the proof.
\end{proof}

\begin{lem}\label{dgr-nzr}
Suppose $\xi(\alpha_1,\alpha_2,\cdots,\alpha_n)$ is a nonzero
polynomial over the field $\mathbb F_q$ such that the degree of
each $\alpha_i$ is at most $m~(m\geq 1)$. If $q\geq n+m$, then
there exist distinct $a_1,a_2,\cdots,a_n\in \mathbb F_q$ such that
$\xi(a_1,a_2,\cdots,a_n)\neq 0$.
\end{lem}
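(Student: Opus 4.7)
The plan is to prove this by a straightforward induction on $n$, reducing the multivariate distinct-evaluation question to repeated univariate root-counting. The constraint is essentially a strengthened Schwartz--Zippel statement in which the chosen values must additionally be pairwise distinct; there is no deep combinatorics here, only careful bookkeeping of the bound.

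For the base case $n=1$, the polynomial $\xi(\alpha_1)$ is a nonzero univariate polynomial of degree at most $m$, so it has at most $m$ roots in $\mathbb F_q$. Since $q\geq 1+m$, some $a_1\in\mathbb F_q$ satisfies $\xi(a_1)\neq 0$, and distinctness is vacuous.

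For the inductive step, assuming the result for $n-1$ variables whenever $q\geq (n-1)+m$, I view $\xi$ as a polynomial in $\alpha_n$ over the ring $\mathbb F_q[\alpha_1,\dots,\alpha_{n-1}]$, and expand
\[
\xi(\alpha_1,\dots,\alpha_n)=\sum_{\ell=0}^{m}\xi_\ell(\alpha_1,\dots,\alpha_{n-1})\,\alpha_n^\ell.
\]
Because $\xi\not\equiv 0$, there is some index $\ell^*$ with $\xi_{\ell^*}\not\equiv 0$, and each $\xi_\ell$ has degree at most $m$ in every remaining variable (the degree bound in each $\alpha_i$ is inherited from $\xi$). Since $q\geq n+m\geq (n-1)+m$, the inductive hypothesis applied to $\xi_{\ell^*}$ yields distinct elements $a_1,\dots,a_{n-1}\in\mathbb F_q$ with $\xi_{\ell^*}(a_1,\dots,a_{n-1})\neq 0$.

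Substituting these values, $\xi(a_1,\dots,a_{n-1},\alpha_n)$ is a nonzero univariate polynomial in $\alpha_n$ of degree at most $m$, hence has at most $m$ roots in $\mathbb F_q$. The set of values of $\alpha_n$ that I must avoid is the union of these at most $m$ roots with $\{a_1,\dots,a_{n-1}\}$, a set of size at most $m+(n-1)<q$. Therefore some $a_n\in\mathbb F_q$ lies outside this set, yielding $n$ distinct elements with $\xi(a_1,\dots,a_n)\neq 0$. The only subtlety is checking that the bound $q\geq n+m$ is exactly what the induction needs to both invoke the $(n-1)$-case and still have room to pick $a_n$ distinct from the $a_i$ already chosen; no other step presents any obstacle.
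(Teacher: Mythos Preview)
Your proof is correct and follows essentially the same induction-on-$n$ argument as the paper's own proof (which the paper describes as ``similar to the Schwartz--Zippel Theorem''). The only cosmetic differences are that the paper singles out $\alpha_1$ rather than $\alpha_n$ and chooses the \emph{leading} nonzero coefficient $\xi_t$ rather than an arbitrary nonzero $\xi_{\ell^*}$; either choice guarantees the specialized univariate polynomial is nonzero of degree at most $m$, so the counting step goes through identically.
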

\begin{proof}
Similar to the Schwartz-Zippel Theorem, this lemma can be proved
by induction on the number of indeterminants $n$. First, for
$n=1$, $\xi(\alpha_1)$ has at most $m$ zeros in $\mathbb F_q$
because the degree of $\alpha_1$ is at most $m$. So there exists
an $a_1\in\mathbb F_q$ such that $\xi(a_1)\neq 0$, provided that
$q\geq 1+m$.

Now assume that $n>1$, $q\geq n+m$ and the induction hypothesis is
true for polynomials of up to $n-1$ indeterminants. Consider the
polynomial $\xi(\alpha_1,\alpha_2,\cdots,\alpha_n)$. Without loss
of generality, assume the degree of $\alpha_1$ in $\xi$ is
$t~(1\leq t\leq m)$. Then we can factor out $\alpha_1$ and obtain
$$\xi(\alpha_1,\alpha_2,\cdots,\alpha_n)
=\sum_{i=0}^t\alpha_1^i\xi_i(\alpha_2,\cdots,\alpha_n),$$ where
$\xi_t(\alpha_2,\cdots,\alpha_n)\not\equiv 0$. Clearly, the degree
of each $\alpha_i~(2\leq i\leq n)$ in $\xi_t$ is at most $m$. The
induction hypothesis implies that there exist distinct
$a_2,\cdots,a_n\in \mathbb F_q$ such that
$\xi_t(a_2,\cdots,a_n)\neq 0$. Then the polynomial
$$\eta(\alpha_1)=\xi(\alpha_1,a_2,\cdots,a_n)
=\sum_{i=0}^t\alpha_1^i\xi_i(a_2,\cdots,a_n)\not\equiv 0$$ and has
degree $t$. Note that $q\geq n+m\geq n+t$. There exists an
$a_1\in\mathbb F_q\backslash\{a_2,\cdots,a_n\}$ such that
$$\xi(a_1,a_2,\cdots,a_n)=\eta(a_1)\neq 0.$$ This completes the
induction.
\end{proof}

\vspace{0.1cm} Now we are able to prove Theorem \ref{main-thm}.
\begin{proof}[Proof of Theorem \ref{main-thm}]
Let $W$ be a $k\times n$ binary matrix satisfying conditions (i)
$-$ (iv) of Lemma \ref{zero-subset}. By Lemma \ref{dgr-nzr}, if
$q\geq n+\lceil\frac{k(k-1)}{n}\rceil$, there exist distinct
$a_1,a_2,\cdots,a_n\in \mathbb F_q$ such that
$\xi(a_1,a_2,\cdots,a_n)\neq 0$.

For each $i\in[k]$, let
\begin{align}\label{def-f_x}
f_i(x)=\prod_{j\in Z_i}(x-a_j)\end{align} where $Z_i$ is the
support of the $i$th row of $W$. Clearly, $f_1(x)$, $f_2(x)$,
$\cdots$, $f_k(x)\in\mathcal P_k[x]$. Moreover, $f_1(x)$,
$f_2(x)$, $\cdots$, $f_k(x)$ are linearly independent in $\mathcal
P_k[x]$, which can be proved as follows. By \eqref{def-f_x}, we
have
\begin{align*}f_i(x)&=\prod_{j\in
Z_i}(x-a_j)\\&=x^{k-1}\!+\!\sum_{\ell=1}^{k-1}\!\left[\sum_{U\subseteq
Z_i,|U|=\ell}\!\left(\prod_{j\in
U}a_j\!\right)\!\right]\!(-1)^\ell x^{k-1-\ell}\\
&=x^{k-1}+\sum_{\ell=1}^{k-1}s_{Z_i}^{(\ell)}(a_1,a_2,\cdots,a_n)(-1)^\ell
x^{k-1-\ell}\end{align*} for each $i\in[k]$. Denote
$c_{i,\ell}:=s_{Z_i}^{(\ell)}(a_1,a_2,\cdots,a_n)$ and
\begin{align*} &C=\\&\!\!\left[\!\!\!\begin{array}{ccccccccc}
1&1&\cdots&1\\
-c_{1,1}&-c_{2,1}&\cdots&-c_{k,1}\\
\cdots&\cdots&\cdots&\cdots\\
(-1)^{k-1}c_{1,k-1}&(-1)^{k-1}c_{2,k-1}&\cdots&(-1)^{k-1}c_{k,k-1}\\
\end{array}\!\!\!\right].
\end{align*}
Then $f_1(x)$,$ f_2(x)$, $\cdots$, $f_k(x)$ are linearly
independent in $\mathcal P_k[x]$ if and only if $\text{det}(C)\neq
0$. From \eqref{def-xi}, we can easily see that
$$\xi(a_1,a_2,\cdots,a_n)=(-1)^{1+2+\cdots+(k-1)}\text{det}(C).$$
Since $\xi(a_1,a_2,\cdots,a_n)\neq 0$, then $\text{det}(C)\neq 0$.
Hence, $f_1(x)$,$ f_2(x),\cdots,f_k(x)$ are linearly independent
in $\mathcal P_k[x]$.

Now, let $\mathcal C$ be the GRS code defined by
$a_1,a_2,\cdots,a_n$ and
\begin{align*} G=\left(\!\begin{array}{ccccccccc}
f_1(a_1)&f_1(a_2)&\cdots&f_1(a_n)\\
f_2(a_1)&f_2(a_2)&\cdots&f_2(a_n)\\
\cdots&\cdots&\cdots&\cdots\\
f_k(a_1)&f_k(a_2)&\cdots&f_k(a_n)\\
\end{array}\!\right).
\end{align*} Since $f_1(x)$,$ f_2(x)$, $\cdots$, $f_k(x)$ are linearly
independent in $\mathcal P_k[x]$, then $G$ is a generator matrix
of $\mathcal C$.

By assumption, $W$ satisfies conditions (i) and (ii) of Lemma
\ref{zero-subset}, that is, the weight of each row of $W$ is $k-1$
and the weight of each column of $W$ is either
$\lfloor\frac{k(k-1)}{n}\rfloor$ or
$\lceil\frac{k(k-1)}{n}\rceil$. Moreover, by \eqref{def-f_x}, for
each $i\in[k]$ and $j\in[n]$, $f_i(a_j)=0$ if and only if $j\in
Z_i$, that is $w_{i,j}=1~($because $Z_i$ is the support of the
$i$th row of $W)$. So according to the construction of $G$, the
number of zeros in every row of $G$ is $k-1$ and the number of
zeros in every column of $G$ is either
$\lfloor\frac{k(k-1)}{n}\rfloor$ or
$\lceil\frac{k(k-1)}{n}\rceil$. Equivalently, the number of ones
in every row of $G$ is $n-k+1$ and the number of ones in every
column of $G$ is either $\lfloor\frac{k(n-k+1)}{n}\rfloor$ or
$\lceil\frac{k(n-k+1)}{n}\rceil$. So $G$ satisfies conditions (P1)
and (P2), hence is an SBGM of $\mathcal C$.
\end{proof}

\section{Examples of the Construction}
As an illustration of our construction, consider the following two
examples, which reflect two typical cases of the output of
Algorithm 1.

\begin{exam}\label{exm-cnstr-W-2}
Let $k=7$ and $n=10$. Then $k(k-1)=4n+2$. So $a=4, r=2$,
$\lfloor\frac{k(k-1)}{n}\rfloor=4$ and
$\lceil\frac{k(k-1)}{n}\rceil=5$. According to \eqref{def-u}, we
have
$$\bm \delta=(5,5,4,4,4,4,4,4,4,4)$$ and according to \eqref{def-c}, we have
$$\overline{S}=\underbrace{1,2,3,4,5,6},\underbrace{1,2,3,4,5},
\underbrace{1,2,3,4},\underbrace{1,2,3},\underbrace{1,2},1.$$ By
Algorithm 1, $\overline{S}$ is divided into $S_1,\cdots,S_6$ as
follows:
\begin{align}\label{exam-eq-2}
\underbrace{1,2,3,4,5}_{S_1},\underbrace{6,1,2,3,4}_{S_2},
\underbrace{5,1,2,3}_{S_3},\underbrace{4,1,2,3}_{S_4},
\underbrace{1,2}_{S_5},\underbrace{1}_{S_6}\end{align} and
$S_7=\cdots=S_{n}=\emptyset$. Hence,
\begin{align*}(|S_1|,|S_2|,\cdots,|S_n|)=(5,5,4,4,2,1,0,0,0,0)\end{align*}
and according to \eqref{def-v}, we have \begin{align*}\bm
\theta&=\bm
\delta-(|S_1|,|S_2|,\cdots,|S_n|)\\&=(0,0,0,0,2,3,4,4,4,4).\end{align*}
Moreover, according to \eqref{def-e}, we have
$$\overline{T}=7,\underbrace{6,7},\underbrace{5,6,7},\underbrace{4,5,6,7},
\underbrace{3,4,5,6,7},\underbrace{2,3,4,5,6,7}.$$ Then by
Algorithm 2, we have $T_1=\cdots=T_4=\emptyset$ and $\overline{T}$
is divided into $T_5,\cdots,T_{10}$ as follows:
\begin{align*}\underbrace{7,6}_{T_5}, \underbrace{7,5,6}_{T_6},
\underbrace{7,4,5,6}_{T_7}, \underbrace{7,3,4,5}_{T_8},
\underbrace{6,7,2,3}_{T_9},
\underbrace{4,5,6,7}_{T_{10}}.\end{align*} So we obtain
\begin{align*}
W=\left[\begin{array}{cccccccccccc}
1&1&1&1&1&1&0&0&0&0\\
1&1&1&1&1&0&0&0&1&0\\
1&1&1&1&0&0&0&1&1&0\\
1&1&0&1&0&0&1&1&0&1\\
1&0&1&0&0&1&1&1&0&1\\
0&1&0&0&1&1&1&0&1&1\\
0&0&0&0&1&1&1&1&1&1\\
\end{array}\right].
\end{align*}

We can easily check that Claims 1 and 2 are true. We now check
that Claim 3 is true. From \eqref{exam-eq-2}, we have
$\lambda_1=6$ and
\begin{align*}X^*&=\{(1,|S_1|),(2,|S_2|),\cdots,(6,|S_{6}|)\}\\
&=\{(1,5),(2,5),(3,4),(4,4),(5,2),(6,1)\}.\end{align*} Suppose
$$X^*=X_1^*\sqcup X_2^*\sqcup\cdots\sqcup X_k^*$$ for some
$\sigma^*\in\mathscr S_k$ and some
$(X_1^*,X_2^*,\cdots,X_k^*)\in\mathcal X_{\sigma^*}$. We show that
$\sigma^*$ and $(X_1^*,X_2^*,\cdots,X_k^*)$ are unique as follows.

First, note that for each $j\in\{1,2,3,4\}$,
$\text{mult}_{X^*}(j)$ equals the weight of the $j$th column of
$W$. Then considering the first four columns of $W$, we have
$\{1,2,3,4\}\subseteq (\cap_{i=1}^3X^*_{i})$, $\{1,2,4\}\subseteq
X^*_{4}$, $\{1,3\}\subseteq X^*_{5}$ and $\{2\}\subseteq X^*_6$.
So it must be that $\sigma^*(7)=1$ and $X^*_7=\emptyset$.
Recursively, we obtain $\sigma^*(6)=2$ and $X^*_6=\{2\}$;
$\sigma^*(5)=3$ and $X^*_5=\{1,3\}$; $\sigma^*(4)=4$ and
$X^*_4=\{1,2,4\}$. And hence, we have $\sigma^*(i)\in\{5,6,7\}$
for each $i\in\{1,2,3\}$, and $\text{mult}_{X_7^*\sqcup
X_6^*\sqcup X_5^*\sqcup X_4^*}(j)=0$ for $j=5,6$.

Further, consider the first five columns of $W$. Since
$\text{mult}_{X_7^*\sqcup X_6^*\sqcup X_5^*\sqcup X_4^*}(5)=0$,
then $\text{mult}_{X_1^*\sqcup X_2^*\sqcup
X_3^*}(5)=\text{mult}_{X^*}(5)=2$ and $\{1,2,3,4,5\}\subseteq
(X^*_{1}\cap X^*_{2})$. So $\sigma^*(3)=5$ and
$X^*_3=\{1,2,3,4\}$. Similarly, considering the first six columns
of $W$, we can obtain $\sigma^*(2)=6$ and $X^*_2=\{1,2,3,4,5\}$.
And finally, we can obtain $\sigma^*(1)=7$ and
$X^*_1=\{1,2,3,4,5,6\}$.

Hence, $\sigma^*\in\mathscr S_k$ and
$(X_1^*,X_2^*,\cdots,X_k^*)\in\mathcal X_{\sigma^*}$ are uniquely
determined. That is, Claim 3 is true.

As discussed in the proof of Lemma \ref{zero-subset}, $W$
satisfies conditions (i) $-$ (iv) of Lemma \ref{zero-subset}.
\end{exam}

\begin{exam}\label{exm-cnstr-W-1}
Let $k=7$ and $n=13$. Then $k(k-1)=3n+3$. So $a=3, r=3$,
$\lfloor\frac{k(k-1)}{n}\rfloor=3$ and
$\lceil\frac{k(k-1)}{n}\rceil=4$. According to \eqref{def-u}, we
have
$$\bm \delta=(4,4,4,3,3,3,3,3,3,3,3,3,3)$$
and according to \eqref{def-c}, we have
$$\overline{S}=\underbrace{1,2,3,4,5,6},\underbrace{1,2,3,4,5},
\underbrace{1,2,3,4},\underbrace{1,2,3},\underbrace{1,2},1.$$ By
Algorithm 1, $\overline{S}$ is divided into $S_1,\cdots,S_7$ as
follows:
\begin{align}\label{exam-eq-1}
\underbrace{1,2,3,4}_{S_1},\underbrace{5,6,1,2}_{S_2},
\underbrace{3,4,5,1}_{S_3},\underbrace{2,3,4}_{S_4},
\underbrace{1,2,3}_{S_5},\underbrace{1,2}_{S_6},
\underbrace{~1~}_{S_7}.\end{align} And $S_8=\cdots=S_n=\emptyset$.
Hence,
\begin{align*}(|S_1|,|S_2|,\cdots,|S_n|)
=(4,4,4,3,3,2,1,0,0,0,0,0,0)\end{align*} and according to
\eqref{def-v},
\begin{align*}\bm \theta&=\bm \delta-(|S_1|,|S_2|,\cdots,|S_n|)\\
&=(0,0,0,0,0,1,2,3,3,3,3,3,3).\end{align*} Moreover, according to
\eqref{def-e}, we have
$$\overline{T}=7,\underbrace{6,7},\underbrace{5,6,7},\underbrace{4,5,6,7},
\underbrace{3,4,5,6,7},\underbrace{2,3,4,5,6,7}.$$ Then by
Algorithm 2, we have $T_1=\cdots=T_5=\emptyset$ and $\overline{T}$
is divided into $T_6,\cdots,T_{13}$ as follows:
\begin{align*}
\underbrace{7}_{T_6},\underbrace{6,7}_{T_7},
\underbrace{5,6,7}_{T_8},\underbrace{4,5,6}_{T_9},
\underbrace{7,3,4}_{T_{10}},\underbrace{5,6,7}_{T_{11}},
\underbrace{2,3,4}_{T_{12}},\underbrace{5,6,7}_{T_{13}}.
\end{align*} So we obtain
\begin{align*}
W=\left[\begin{array}{ccccccccccccc}
1&1&1&0&1&1&1&0&0&0&0&0&0\\
1&1&0&1&1&1&0&0&0&0&0&1&0\\
1&0&1&1&1&0&0&0&0&1&0&1&0\\
1&0&1&1&0&0&0&0&1&1&0&1&0\\
0&1&1&0&0&0&0&1&1&0&1&0&1\\
0&1&0&0&0&0&1&1&1&0&1&0&1\\
0&0&0&0&0&1&1&1&0&1&1&0&1\\
\end{array}\right].
\end{align*}
We can check that Claims 1 and 2 are true. Moreover, let
\begin{align*}X^*&=\{(1,|S_1|),(2,|S_2|),\cdots\!,(7,|S_{7}|)\}\\
&=\{(1,4),(2,4),(3,4),(4,3),(5,3),(6,2),(7,1)\}\end{align*} and
suppose $X^*=X_1^*\sqcup X_2^*\sqcup\cdots\sqcup X_k^*$ for some
$\sigma^*\in\mathscr S_k$ and some
$(X_1^*,X_2^*,\cdots,X_k^*)\in\mathcal X_{\sigma^*}$. Then similar
to Example \ref{exm-cnstr-W-2}, we can obtain $\sigma^*(i)=k-i+1,
~\forall i\in[k]$, and $X^*_7=\emptyset$, $X^*_6=\{2\}$,
$X^*_5=\{2,3\}$, $X^*_4=\{1,3,4\}$, $X^*_3=\{1,3,4,5\}$,
$X^*_2=\{1,2,4,5,6\}$, $X^*_1=\{1,2,3,5,6,7\}$. So both
$\sigma^*\in\mathscr S_k$ and
$(X_1^*,X_2^*,\cdots,X_k^*)\in\mathcal X_{\sigma^*}$ are unique
and Claim 3 is true.
\end{exam}

\section{Conclusion}
We show that for any $n\!\geq\! k\!\geq\! 1$, there exists an
$[n,k]$ sparsest and balanced GRS code over any field $\mathbb
F_q$ with size $q\!\geq\! n+\!\lceil\frac{k(k-1)}{n}\rceil$. It is
still an open problem whether $[n,k]$ sparsest and balanced GRS
codes exist when the field size $q$ satisfies
$n+1<q<n+\lceil\frac{k(k-1)}{n}\rceil$.

\appendices

\section{Proof of Claim 1}
In this appendix, we are to prove Claim 1.

By \eqref{def-u} and \eqref{avg}, we have $\delta_\ell\!\leq\!
a+1\!\leq\! k-1$ for each $\ell\in[n]$. Then there exists a unique
$\lambda_{a+1}\in[n]$ such that
\begin{align}\label{lmd-a-plus-1}
\sum_{\ell=1}^{\lambda_{a+1}-1}\delta_\ell<\sum_{\ell=a+1}^{k-1}\ell\leq
\sum_{\ell=1}^{\lambda_{a+1}}\delta_\ell.\end{align} According to
Algorithm 1, we have
\begin{align}\label{size-S-leq-j0}
|S_j|=\delta_j, ~\forall
j\in\{1,2,\cdots,\lambda_{a+1}\}\end{align} and each of
$S_1,S_2,\cdots,S_{\lambda_{a+1}}$ is a subset of $[k]$. Moreover,
since $\delta_{\lambda_{a+1}}\leq a+1$, then from
\eqref{lmd-a-plus-1}, we obtain
$$\sum_{\ell=1}^{\lambda_{a+1}}|S_\ell|=\sum_{\ell=1}^{\lambda_{a+1}}\delta_\ell=
\left(\sum_{\ell=a+1}^{k-1}\ell\right)+t_0$$ for some
$t_0\!\in\!\{0,1,\cdots,a\}$. We need to consider the following
two cases.

\textbf{Case 1}. $t_0\in\{1,2,\cdots,a\}$.

Then according to Algorithm 1, we have
\begin{itemize}
 \item[$\bullet$] For $j=\lambda_{a+1}+\ell$ and $1\leq\ell \leq a-t_0$,
 \begin{align}\label{elmt-S-1}
 S_{j}&=\{t_0\!+\!1,t_0\!+\!2,\cdots\!,a\!-\!\ell\!+\!1,
 1,2,\cdots\!,t_0\}\nonumber\\
 &=\{1,2,\cdots,a-\ell+1\};\end{align}
 \item[$\bullet$] For $j=\lambda_{a+1}+\ell$ and $a-t_0+1\leq\ell\leq a-1$,
 \begin{align}\label{elmt-S-2}
 S_{j}=\{1,2,\cdots,a-\ell\}\end{align}
\end{itemize}
Moreover, $\lambda_{1}:=\lambda_{a+1}+a-1$ is the value of $j$ at
the end of the while loop of Algorithm 1 and
\begin{align}\label{s-v-S-1}
&(|S_1|,|S_2|,\cdots\!,|S_{\lambda_{1}}|)=\nonumber\\
&(\delta_1,\cdots\!,\delta_{\lambda_{a+1}},a,a\!-\!1,\cdots\!,t_0\!+\!1,t_0\!-\!1,
\cdots\!,2,1).\end{align}

\textbf{Case 2}. $t_0=0$.

Then according to Algorithm 1, we have
\begin{itemize}
 \item [$\bullet$] For $j=\lambda_{a+1}+\ell$ and $\ell\in[a]$,
 \begin{align}\label{elmt-S-3}S_{j}=\{1,2,\cdots,a-\ell+1\}.\end{align}
\end{itemize}
Moreover, $\lambda_{1}:=\lambda_{a+1}+a$ is the value of $j$ at
the end of the while loop of Algorithm 1 and
\begin{align}\label{s-v-S-2}
&(|S_1|,|S_2|,\cdots\!,|S_{\lambda_{1}}|)=\nonumber\\
&(\delta_1,\cdots\!,\delta_{\lambda_{a+1}},a,a\!-\!1,\cdots,2,1).\end{align}

In both cases, clearly, each of
$S_{\lambda_{a+1}+1},\cdots,S_{\lambda_{1}}$ is a subset of $[k]$.
Moreover, we have $\lambda_{1}\leq n$, which can be proved by
contradiction as follows. Suppose $\lambda_{1}>n$. Then we have
\begin{align*}\sum_{j=1}^{\lambda_{1}}|S_j|
+\sum_{\ell=1}^a\ell>\sum_{j=1}^{n}|S_j|
+\sum_{\ell=1}^a\ell.\end{align*} Moreover, since $\delta_j\leq
a+1$ for all $j\in[n]~($see \eqref{def-u}$)$, then from
\eqref{s-v-S-1} and \eqref{s-v-S-2}, we have
\begin{align*}\sum_{j=\lambda_{a+1}+1}^{n}|S_j|
+\sum_{\ell=1}^a\ell\geq\sum_{j=\lambda_{a+1}+1}^{n}\delta_j.\end{align*}
From the above two inequalities, we have
\begin{align}\label{ctrn-sum}\sum_{j=1}^{\lambda_{1}}|S_j|
+\sum_{\ell=1}^a\ell>\sum_{j=1}^{n}\delta_j=k(k-1)\end{align}
where the last equation comes from \eqref{sum-ones}. However,
combining the facts
$\sum_{j=1}^{\lambda_{1}}|S_j|\!=\!K\!=\!\frac{k(k-1)}{2}$ and
$a\!<\!k\!-\!1$, we have
$$\sum_{j=1}^{\lambda_{1}}|S_j|
+\sum_{\ell=1}^a\ell<\frac{k(k-1)}{2}+\sum_{\ell=1}^{k-1}\ell=
k(k-1)$$ which contradicts to \eqref{ctrn-sum}. Hence we proved
that $\lambda_{1}\leq n$.

Further, according to Algorithm 1, we have
\begin{align}\label{C1-S-j1-n}S_{\lambda_{1}+1}=\cdots=S_n=\emptyset.\end{align}
So in Case 1, we have
\begin{align}\label{S-size-v-1}
&(|S_1|,|S_2|,\cdots\!,|S_{n}|)=\nonumber\\
&(\delta_1,\cdots,\delta_{\lambda_{a+1}},a,a\!-\!1\cdots\!,
t_0\!+\!1,t_0\!-\!1,\cdots\!,2,1,
\overbrace{0,\cdots,0}^{n-\lambda_{1}~\text{zeros}})\end{align}
where $t_0\in\{1,2,\cdots,a\}$ and
$\lambda_{1}=\lambda_{a+1}+a-1$; in Case 2, we have
\begin{align}\label{S-size-v-2}
&(|S_1|,|S_2|,\cdots\!,|S_{n}|)=\nonumber\\
&(\delta_1,\cdots,\delta_{\lambda_{a+1}},a,a\!-\!1\cdots\!,2,1,
\overbrace{0,\cdots,0}^{n-\lambda_{1}~\text{zeros}})\end{align}
where $\lambda_{1}=\lambda_{a+1}+a-1$. In both cases, each $S_j$
is a subset of $[k]$ and, as multisets,
$\sqcup_{j=1}^nS_j=\sqcup_{j=1}^{\lambda_{1}}S_j=S$.

In Example \ref{exm-cnstr-W-2}, we have $a+1=5$. From
\eqref{exam-eq-2}, we can obtain $\lambda_6=2$, $\lambda_5=3$ and
$t_0=3$. So this example falls into Case 1 and
$\lambda_1=\lambda_{a+1}+a-1=6$.

In Example \ref{exm-cnstr-W-1}, we have $a+1=4$. From
\eqref{exam-eq-1}, we can obtain $\lambda_6=2$, $\lambda_5=3$,
$\lambda_4=4$ and $t_0=0$. So this example falls into Case 2 and
$\lambda_1=\lambda_{a+1}+a=7$.

\section{Proof of Claim 2}

To prove Claim 2, we continue considering the two cases discussed
in Appendix A.

First, consider Case 1. We need to divide this case into the
following four subcases according to the value of $r$.

\textbf{Case 1.1}: $r\leq \lambda_{a+1}$.

Then by \eqref{def-u}, \eqref{def-v} and \eqref{S-size-v-1}, we
have
\begin{align*}\bm \theta&=\bm \delta-(|S_1|,|S_2|,\cdots,|S_{n}|)\\
&=(\overbrace{0,\cdots,0}^{\lambda_{a+1}~\text{zeros}},
0,1,\cdots,a-t_0-1,a-t_0+1,\nonumber\\&~~~~~\cdots,a-2,a-1,\delta_{\lambda_{1}+1},
\cdots,\delta_n)\end{align*} where
$\delta_{\lambda_{1}+1}=\cdots=\delta_{n}=a$. That is,
\begin{align}\label{vector-v-1-1}
\theta_{j}=\left\{\begin{aligned}
&0,& &1\leq j\leq \lambda_{a+1}; \\
&\ell-1,& &j=\lambda_{a+1}+\ell\text{~and~}1\leq\ell\leq a-t_0;\\
&\ell,& &j=\lambda_{a+1}+\ell\text{~and~}a-t_0+1\leq\ell\leq a-1;\\
&\delta_j=a,& &\lambda_{1}+1\leq j\leq n. \\
\end{aligned} \right.
\end{align}
According to Algorithm 2, we have
\begin{itemize}
 \item[$\bullet$] For $1\leq j\leq\lambda_{a+1}+1$, $T_j=\emptyset;$
 \vspace{0.1cm}
 \item[$\bullet$] For $j=\lambda_{a+1}+\ell$ and $2\leq\ell\leq a-t_0$,
 \begin{align}\label{elmt-T-1}
 T_{j}&=\{k\!-\!(\ell\!-\!1)\!+\!1,k\!-\!(\ell\!-\!1)\!+\!2, \cdots\!,k\}\nonumber
 \\&=\{k\!-\!\ell\!+\!2,k\!-\!\ell\!+\!3,\cdots\!,k\};\end{align}
 \item[$\bullet$] For $j=\lambda_{a+1}+\ell$ and $a-t_0+1\leq\ell\leq a-1$,
 \begin{align}\label{elmt-T-2}
 T_{j}\!&=\!\{k\!-\!(a\!-\!t_0)\!+\!1,\cdots,k,k\!-\!\ell\!+\!1,
 \cdots,k\!-\!(a-\!t_0)\}\nonumber\\&=\!\{k\!-\!\ell\!+\!1,k\!-\!\ell\!+\!2,
 \cdots,k\}\end{align}
 \item[$\bullet$] Finally,
 $T_{\lambda_{1}+1}, T_{\lambda_{1}+2},\cdots\!,T_n$ are obtained by dividing the
 sequence
 $$\underbrace{k\!-\!(a\!-\!t_0)\!+\!1,\cdots\!,k},
 \underbrace{k\!-\!a\!+\!1,
 \cdots\!,k},\cdots\!, \underbrace{2,\cdots\!,k}$$ into $n-\lambda_{1}$
 segments of length $a$, and $T_{\lambda_{1}+j}$ is then formed by the
 elements of the $j$th segment, $1\!\leq\! j\!\leq\! n\!-\!\lambda_{1}$.
\end{itemize}
Clearly, each $T_j$ is a subset of $[k]$. Moreover, since
$$\sum_{j=1}^n\theta_j=\sum_{j=1}^n\delta_j-\sum_{j=1}^n|S_j|=\frac{k(k-1)}{2}=K$$
which is equal to the length of $\overline{T}$. So Algorithm 2
always divides $\overline{T}$ into $T_{1}, T_{2},\cdots,T_n$ of
size $\theta_1,\theta_2,\cdots,\theta_n$, respectively. Hence, as
multisets, we have $\sqcup_{j=1}^nT_j=T$.

Note that $T_j=\emptyset$ for $1\leq j\leq\lambda_{a+1}+1$, and
$S_j=\emptyset$ for $\lambda_{1}+1\leq j\leq n$. So $S_j\cap
T_j=\emptyset$ for
$j\in\{1,\cdots,\lambda_{a+1}+1\}\cup\{\lambda_{1}+1, \cdots,n\}$.
Moreover, for $\lambda_{a+1}+2\leq j\leq \lambda_{1}$, by
\eqref{elmt-S-1}, \eqref{elmt-S-2}, \eqref{elmt-S-3},
\eqref{elmt-T-1} and \eqref{elmt-T-2}, we have $$\max(S_j)\leq
a-\ell+1<k-\ell+1\leq\min(T_j).$$ So $S_j\cap T_j=\emptyset$ for
$j\in\{\lambda_{a+1}+2,\cdots,\lambda_{1}\}$. Hence, we have
$S_j\cap T_j=\emptyset$ for all $j\in[n]$.

\textbf{Case 1.2}: $\lambda_{a+1}<r\leq \lambda_{a+1}+a-t_0$.

Then $r=\lambda_{a+1}+t_1$, where $1\leq t_1\leq a-t_0$, and
\begin{align*}
\theta_{j}=\left\{\begin{aligned}
&0,& &1\leq j\leq \lambda_{a+1}; \\
&\ell,& &j=\lambda_{a+1}+\ell\text{~and~}1\leq\ell\leq t_1;\\
&\ell-1,& &j=\lambda_{a+1}+\ell\text{~and~}t_1+1\leq\ell\leq a-t_0;\\
&\ell,& &j=\lambda_{a+1}+\ell\text{~and~}a-t_0+1\leq\ell\leq a-1;\\
&\delta_j=a,& &\lambda_{1}+1\leq j\leq n. \\
\end{aligned} \right.
\end{align*}
According to Algorithm 2, we have
\begin{itemize}
 \item [$\bullet$] For $1\leq j\leq\lambda_{a+1}$, $T_j=\emptyset;$
 \item [$\bullet$] For $j=\lambda_{a+1}+\ell$ and $1\leq\ell\leq t_1$,
 $$T_{j}=\{k\!-\!\ell\!+\!1,k\!-\!\ell\!+\!2,\cdots\!,k\};$$
 \item [$\bullet$] For $j=\lambda_{a+1}+\ell$ and $t_1+1\leq\ell\leq a-t_0$,
 $$T_{j}\!=\!\{k\!-\!\ell\!+\!1,k\!-\!\ell\!+\!2,\cdots\!,k\}
 \backslash\{k\!-\!l\!+\!t_1\!+\!1\};$$
 \item [$\bullet$] For $j=\lambda_{a+1}+\ell$ and $a-t_0+1\leq\ell\leq a-1$
 $$T_{j}\!=\!\{k\!-\!\ell\!+\!1,k\!-\!\ell\!+\!2,
 \cdots\!,k\};$$
 \item [$\bullet$] Finally, $T_{\lambda_{1}+1}, T_{\lambda_{1}+2},\cdots\!,T_n$
 are obtained by dividing the sequence
 $$\underbrace{k\!-\!(a\!-\!t_0-\!t_1)\!+\!1,\cdots\!,k},\underbrace{k\!-\!a\!+\!1,
 \cdots\!,k},\cdots\!,\underbrace{2,\cdots\!,k}$$ into $n\!\!-\!\!\lambda_{1}$
 segments of length $a$, and $T_{\lambda_{1}+j}$ is formed by the
 elements of the $j$th segment, $1\!\leq\! j\!\leq\! n\!-\!\lambda_{1}$.
\end{itemize}

\textbf{Case 1.3}: $\lambda_{a+1}+a-t_0<r\leq \lambda_{1}$.

Then $r=\lambda_{a+1}+t_2$, where $a-t_0+1\leq t_2\leq a-1$, and
\begin{align*}
\theta_{j}=\left\{\begin{aligned}
&0,& &1\leq j\leq \lambda_{a+1}; \\
&\ell,& &j=\lambda_{a+1}+\ell\text{~and~}1\leq\ell\leq a-t_0;\\
&\ell+1,& &j=\lambda_{a+1}+\ell\text{~and~}a-t_0+1\leq\ell\leq t_2;\\
&\ell,& &j=\lambda_{a+1}+\ell\text{~and~}t_2+1\leq\ell\leq a-1;\\
&\delta_j=a,& &\lambda_{1}+1\leq j\leq n. \\
\end{aligned} \right.
\end{align*}
According to Algorithm 2, we have
\begin{itemize}
 \item [$\bullet$] For $1\leq j\leq\lambda_{a+1}$, $T_j=\emptyset;$
 \item [$\bullet$] For $j=\lambda_{a+1}+\ell$ and $1\leq\ell\leq a-t_0$,
 $$T_{j}=\{k\!-\!\ell\!+\!1,k\!-\!\ell\!+\!2,\cdots\!,k\};$$
 \item [$\bullet$] For $j=\lambda_{a+1}+\ell$ and $a-t_0+1\leq\ell\leq t_2$,
 $$T_{j}\!=\!\{k\!-\!\ell,k\!-\!\ell\!+\!1,\cdots\!,k\};$$
 \item [$\bullet$] For $j=\lambda_{a+1}+\ell$ and $t_2+1\leq\ell\leq a-1$
 $$T_{j}\!=\!\{k\!-\!\ell,k\!-\!\ell\!+\!1, \cdots\!,k\}
 \backslash\{k\!-\!a\!+\!t_0\!-\!\ell\!+\!t_2\};$$
 \item [$\bullet$] Finally, $T_{\lambda_{1}+1}, T_{\lambda_{1}+2},\cdots\!,T_n$
 are obtained by dividing the sequence
 $$\underbrace{k\!-\!2a\!+\!t_0\!+\!t_2\!+\!1,\cdots\!,k},\underbrace{k\!-\!a,
 \cdots\!,k},\cdots,\underbrace{2,\cdots\!,k}$$ into $n\!\!-\!\!\lambda_{1}$
 segments of length $a$, and $T_{\lambda_{1}+j}$ is formed by the
 elements of the $j$th segment, $1\!\leq\! j\!\leq\! n\!-\!\lambda_{1}$.
\end{itemize}

\textbf{Case 1.4}: $\lambda_{1}<r<n$.

Then by \eqref{def-u}, \eqref{def-v} and \eqref{S-size-v-1}, we
have
\begin{align*}
\theta_{j}=\left\{\begin{aligned}
&0,& &1\leq j\leq \lambda_{a+1}; \\
&\ell,& &j=\lambda_{a+1}+\ell\text{~and~}1\leq\ell\leq a-t_0;\\
&\ell+1,& &j=\lambda_{a+1}+\ell\text{~and~}a-t_0+1\leq\ell\leq a-1;\\
&\delta_j\leq a+1,& &\lambda_{1}+1\leq j\leq n. \\
\end{aligned} \right.
\end{align*}
According to Algorithm 2, we have
\begin{itemize}
 \item [$\bullet$] For $1\leq j\leq\lambda_{a+1}$, $T_j=\emptyset;$
 \item [$\bullet$] For $j=\lambda_{a+1}+\ell$ and $1\leq\ell\leq a-t_0$,
 $$T_{j}=\{k\!-\!\ell\!+\!1,k\!-\!\ell\!+\!2,\cdots\!,k\};$$
 \item [$\bullet$] For $j=\lambda_{a+1}+\ell$ and $a-t_0+1\leq\ell\leq a-1$
 $$T_{j}\!=\!\{k\!-\!\ell,k\!-\!\ell\!+\!1, \cdots\!,k\};$$
 \item [$\bullet$] Finally, $T_{\lambda_{1}+1}, T_{\lambda_{1}+2},\cdots\!,T_n$
 are obtained by dividing the sequence
 $$\underbrace{k\!-\!(a\!-\!t_0),\cdots\!,k},\underbrace{k\!-\!a,
 \cdots\!,k},\cdots,\underbrace{2,\cdots\!,k}$$ into $n\!\!-\!\!\lambda_{1}$
 segments of length $a$, and $T_{\lambda_{1}+j}$ is formed by the
 elements of the $j$th segment, $1\!\leq\! j\!\leq\! n\!-\!\lambda_{1}$.
\end{itemize}

For all of these subcases, similar to Case 1.1, it can be verified
that for each $j\in[n]$, $T_j$ is a subset of $[k]$, $S_j\cap
T_j=\emptyset$ and, when viewed as multisets, we have
$\sqcup_{j=1}^nT_j=T$.

\vspace{0.1cm}Next, consider Case 2. We need to divide this case
into the following three subcases according to the value of $r$.

\textbf{Case 2.1}: $r\leq \lambda_{a+1}$.

Then by \eqref{def-u}, \eqref{def-v} and \eqref{S-size-v-2}, we
have
\begin{align*}
\theta_{j}=\left\{\begin{aligned}
&0,& &1\leq j\leq \lambda_{a+1}; \\
&\ell-1,& &j=\lambda_{a+1}+\ell\text{~and~}1\leq\ell\leq a;\\
&\delta_j=a,& &\lambda_{1}+1\leq j\leq n. \\
\end{aligned} \right.
\end{align*}
According to Algorithm 2, we have
\begin{itemize}
 \item [$\bullet$] For $1\leq j\leq\lambda_{a+1}+1$, $T_j=\emptyset;$
 \item [$\bullet$] For $j=\lambda_{a+1}+\ell$ and $2\leq\ell\leq a$,
 $$T_{j}=\{k\!-\!\ell\!+\!2,k\!-\!\ell\!+\!3,\cdots,k\};$$
 \item [$\bullet$] Finally, $T_{\lambda_{1}+1}, T_{\lambda_{1}+2},\cdots,T_n$
 are obtained by dividing the sequence
 $$\underbrace{k\!-\!a\!+\!1, \cdots\!,k},
 \cdots,\underbrace{2,\cdots\!,k}$$ into $n\!\!-\!\!\lambda_{1}$
 segments of length $a$, and $T_{\lambda_{1}+j}$ is formed by the
 elements of the $j$th segment, $1\!\leq\! j\!\leq\!
 n\!-\!\lambda_{1}$.
\end{itemize}

\textbf{Case 2.2}: $\lambda_{a+1}<r\leq \lambda_{1}$.

Then $r=\lambda_{a+1}+t_1$, where $1\leq t_1\leq a$ and
\begin{align*}
\theta_{j}=\left\{\begin{aligned}
&0,& &1\leq j\leq \lambda_{a+1}; \\
&\ell,& &j=\lambda_{a+1}+\ell\text{~and~}1\leq\ell\leq t_1;\\
&\ell-1,& &j=\lambda_{a+1}+\ell\text{~and~}t_1+1\leq\ell\leq a;\\
&\delta_j=a,& &\lambda_{1}+1\leq j\leq n. \\
\end{aligned} \right.
\end{align*}
According to Algorithm 2, we have
\begin{itemize}
 \item [$\bullet$] For $1\leq j\leq\lambda_{a+1}$, $T_j=\emptyset;$
 \item [$\bullet$] For $j=\lambda_{a+1}+\ell$ and $1\leq\ell\leq t_1$,
 $$T_{j}=\{k\!-\!\ell\!+\!1,k\!-\!\ell\!+\!2,\cdots,k\};$$
 \item [$\bullet$] For $j=\lambda_{a+1}+\ell$ and $t_1+1\leq\ell\leq a$,
 $$T_{j}=\{k\!-\!\ell\!+\!2,k\!-\!\ell\!+\!3,\cdots,k\}
 \backslash\{k\!-\!\ell\!+\!t_1\!+\!1\};$$
 \item [$\bullet$] Finally, $T_{\lambda_{1}+1}, T_{\lambda_{1}+2},\cdots,T_n$
 are obtained by dividing the sequence
 $$\underbrace{k\!-\!a\!+\!t_1\!+\!1, \cdots\!,k},
 \underbrace{k\!-\!a\!+\!1, \cdots\!,k},
 \cdots,\underbrace{2,\cdots\!,k}$$ into $n\!\!-\!\!\lambda_{1}$
 segments of length $a$, and $T_{\lambda_{1}+j}$ is formed by the
 elements of the $j$th segment, $1\!\leq\! j\!\leq\!
 n\!-\!\lambda_{1}$.
\end{itemize}

\textbf{Case 2.3}: $\lambda_{1}<r<n$.

Then by \eqref{def-u}, \eqref{def-v} and \eqref{S-size-v-2}, we
have
\begin{align*}
\theta_{j}=\left\{\begin{aligned}
&0,& &1\leq j\leq \lambda_{a+1}; \\
&\ell,& &j=\lambda_{a+1}+\ell\text{~and~}1\leq\ell\leq a;\\
&\delta_j\leq a+1,& &\lambda_{1}+1\leq j\leq n. \\
\end{aligned} \right.
\end{align*}
According to Algorithm 2, we have
\begin{itemize}
 \item [$\bullet$] For $1\leq j\leq\lambda_{a+1}$, $T_j=\emptyset;$
 \item [$\bullet$] For $j=\lambda_{a+1}+\ell$ and $1\leq\ell\leq a$,
 $$T_{j}=\{k\!-\!\ell\!+\!1,k\!-\!\ell\!+\!2,\cdots,k\};$$
 \item [$\bullet$] Finally, $T_{\lambda_{1}+1}, T_{\lambda_{1}+2},\cdots,T_n$
 are obtained by dividing the sequence
 $$\underbrace{k\!-\!a, \cdots\!,k},
 \underbrace{k\!-\!a\!-\!1, \cdots\!,k},
 \cdots,\underbrace{2,\cdots\!,k}$$ into $n\!\!-\!\!\lambda_{1}$
 segments of length $a$, and $T_{\lambda_{1}+j}$ is formed by the
 elements of the $j$th segment, $1\!\leq\! j\!\leq\!
 n\!-\!\lambda_{1}$.
\end{itemize}

For all of these subcases, similar to Case 1.1, it can be verified
that for each $j\in[n]$, $T_j$ is a subset of $[k]$, $S_j\cap
T_j=\emptyset$ and, when viewed as multisets, we have
$\sqcup_{j=1}^nT_j=T$.

Combining all of the above discussions, we proved that each $T_j$
is a subset of $[k]$, $S_j\cap T_j=\emptyset$ and, when viewed as
multisets, $\sqcup_{j=1}^nT_j=T$.

\section{Proof of Claim 3}
We again consider all the cases and subcases discussed in
Appendices A and B.

We use the notations $\lambda_1$ and $\lambda_{a+1}$ with the same
meaning as in Appendices A and B. We further define $\lambda_j$
for all $j\in\{2,\cdots,a\}\cup\{a+2,\cdots,k-1\}$ as follows.

For Case 1, let
\begin{align}\label{def-lmd-rm-1}
\lambda_{j}=\left\{\begin{aligned}
&\lambda_{a+1}+a-j+1,& &\text{if} ~t_0+1\leq j\leq a; \\
&\lambda_{a+1}+a-j,& &\text{if} ~2\leq j\leq t_0.\\
\end{aligned} \right.
\end{align} And for Case 2, let
\begin{align}\label{def-lmd-rm-2}
\lambda_{j}=\lambda_{a+1}+a-j+1, ~\forall ~2\leq j\leq a.
\end{align}

For each $j\in\{a+2,\cdots,k-1\}$, let $\lambda_j\in[n]$ be such
that
\begin{align}\label{lmd-j}
\sum_{\ell=1}^{\lambda_{j}-1}\delta_\ell<\sum_{\ell=j}^{k-1}\ell\leq
\sum_{\ell=1}^{\lambda_{j}}\delta_\ell.\end{align} Note that by
\eqref{def-u} and \eqref{avg}, we have $\delta_j\!\leq\!
a+1\!\leq\! k-1$ for each $j\in[n]$. Then for each
$j\in\{a+2,\cdots,k-1\}$, it is easy to see that $\lambda_j$ is a
uniquely determined value. 

As an illustration, consider again Example \ref{exm-cnstr-W-2}.
Note that $k=7$ and $a=4$, and in Appendix A, we have obtained
$\lambda_{5}=3$ and $\lambda_{1}=6$. Now we can further obtain
$\lambda_{6}=2$, $\lambda_{3}=\lambda_{4}=4$ and $\lambda_{2}=5$.
In general, for Case 1, by \eqref{def-lmd-rm-1} and \eqref{lmd-j},
we always have
$$\lambda_{k-1}<\cdots<\lambda_{t_0-1}<\lambda_{t_0}
=\lambda_{t_0+1}<\cdots<\lambda_{1}.$$

For Example \ref{exm-cnstr-W-1}, note that $k=7$ and $a=3$, and in
Appendix A, we have obtained $\lambda_{4}=4$ and $\lambda_{1}=7$.
We can further obtain $\lambda_{6}=2$, $\lambda_{5}=3$,
$\lambda_{3}=5$ and $\lambda_{2}=6$. In general, for Case 2, by
\eqref{def-lmd-rm-2} and \eqref{lmd-j}, we always have
$$\lambda_{k-1}<\lambda_{k-2}<\cdots<\lambda_{2}<\lambda_{1}.$$

Now let
$X^*=\{(1,|S_1|),(2,|S_2|),\cdots,(\lambda_1,|S_{\lambda_1}|)\}$
and suppose $X^*$ is represented as $X^*=X_1^*\sqcup
X_2^*\sqcup\cdots\sqcup X_k^*$ for some $\sigma^*\in\mathscr S_k$
and some $(X_1^*,X_2^*,\cdots,X_k^*)\in\mathcal X_{\sigma^*}$.
Then for all subcases as discussed in the proof of Claim 2, it is
a mechanical (but somewhat tedious) work to check, just as in
Example \ref{exm-cnstr-W-2}, that
$$X_k^*=\emptyset$$ and
$$X_{i}^*=Z_{i}\cap\{1,2,\cdots,\lambda_{i}\}, ~\forall
i\in\{1,2,\cdots,k-1\}.$$ Hence, $\sigma^*: i\mapsto k-i+1,
~\forall i\in[k]$, is the unique permutation in $\mathscr S_k$ and
$(X^*_1,X^*_2,\cdots,X^*_k)$ is the unique choice in $\mathcal
X_{\sigma^*}$.

\end{document}